\documentclass[12pt,oneside,reqno]{amsart}
\usepackage[final]{graphicx}
\usepackage{amsfonts}
\usepackage{pdfsync}
\usepackage{amsmath}
\usepackage{amssymb}
\usepackage{hyperref}
\usepackage{bm}
\usepackage{caption}
\usepackage{enumerate} 
\usepackage{dcolumn} 
\usepackage{color}
\usepackage{physics}
\usepackage[dvipsnames]{xcolor}


\topmargin-.5in \textwidth6.6in \textheight9in \oddsidemargin0in
\evensidemargin0in

\newtheorem{theorem}{Theorem}[section]
\newtheorem*{theorem*}{Theorem}

\newtheorem{corollary}[theorem]{Corollary}
\newtheorem{proposition}[theorem]{Proposition}

\usepackage{pdfpages}

  \newcommand{\R}{\mathbb{R}}
  \newcommand{\C}{\mathbb{C}}
\newcommand{\Dom}{\operatorname{Dom}}

\begin{document}
\title[A block--diagonal form for four--component operators describing GQDs]{\bf A block--diagonal form for four--component operators describing graphene quantum dots}

\author[Benguria]{Rafael~D.~Benguria$^1$}

\author[Stockmeyer]{Edgardo~Stockmeyer$^2$}

\author[Vallejos]{Crist\'obal~Vallejos$^3$}

\author[Van Den Bosch]{Hanne Van Den Bosch$^4$}

\address{$^1$ Instituto  de F\'\i sica, Pontificia Universidad Cat\' olica de Chile,}
\email{{rbenguri@uc.cl}}

\address{$^2$ Instituto  de F\'\i sica, Pontificia Universidad Cat\' olica de Chile,}
\email{{stock@fis.puc.cl}}

\address{$^3$ Physics Department, Pennsylvania State University,}
\email{{civ5104@psu.edu}}

\address{$^4$ Departamento de Ingenier\'\i a Matemática, Universidad de Chile, and Centro de Modelamiento Matemático, Universidad de Chile \& IRL-CNRS 2807.}
\email{{hvdbosch@dim.uchile.cl}}

\begin{abstract} 
We consider four--component Dirac operators on domains in the plane. With suitable boundary conditions, these operators describe graphene quantum dots. The most general boundary conditions  are defined by a matrix depending on four real parameters. For operators with constant boundary parameters we show that the Hamiltonian is unitary equivalent to two copies of the two--component operator. This allows to extend the known results for this type of operators to the four--component case. 
As an application, we identify the boundary conditions from the tight--binding model for graphene that give rise to a block--diagonal operator in the continuum limit.
\end{abstract}

\maketitle

\section{Introduction} \label{sec:intro}
Low energy electronic excitations in graphene are described by a massless Dirac operator acting on four--component spinors \cite{Castro2009,DiViMe1984,FeWein2012}. 
The four components take into account a degree of freedom for each of the points in the unit celll of the honeycomb lattice, sometimes called pseudospin, and a degree of freedom for quasiparticles with momenta near the unequivalent Dirac points at the corners of the hexagonal Brillouin zone, the so-called valleys. In the valley--isotropic representation, the Hamiltonian describing these excitations is a direct sum of two two-dimensional Dirac operators, so we define the differential expression

\begin{align}
H= \mqty( T & 0 \\ 0 & T)  \, , \quad T= -i \bm{\sigma} \cdot \grad  = -i\,\mqty( 0 & \partial_1-i\partial_2 \\  \partial_1+i\partial_2 & 0)  \, .
\label{DiracHamiltonian}
\end{align}
Here, we write $\bm{\sigma}=(\sigma_1,\sigma_2)^{\top}$ where $\sigma_1$ and $\sigma_2$ are the first two  Pauli matrices and we use the usual representation, 
\begin{align*}
\sigma_1=\mqty(0&1 \\ 1&0)\, , \quad \sigma_2=\mqty(0 & -i \\ i &0)\, ,
\quad \sigma_3= \mqty(1 &0 \\ 0 & -1) \, .
\end{align*}

When describing electrons confined to a piece of graphene with boundary, suitable boundary conditions must be imposed. Three of these are commonly used in the physics literature: the so-called zigzag, armchair, and infinite mass boundary conditions. The choice of boundary conditions is relevant both from a physical and a mathematical point of view. From the mathematical point of view, they determine the regularity of spinors in the domain of the Hamiltonian and its spectrum. The spectrum and the related transport properties determine the behaviour of the graphene quantum dot when used, for instance, as a single electron transistor.

For the two--dimensional Dirac operator $T$, the most general boundary conditions have been studied by three of us in collaboration with S{\o}ren Fournais in \cite{BenSoStVan2017,BenSoStVan2017-2}. It turns out that there is a one-parameter family of boundary conditions (equation \eqref{eq:2d_bc} below) interpolating between the \emph{zigzag} and \emph{infinite mass} cases. We refer to \cite{BeMon1987, ArrLTRay2017,ArrLTRay2019,StoVu2019, Benhellal2019} for the definition and results on the infinite mass operator and \cite{Sch1995} for early results on the zigzag boundary condition. Further papers on the mathematics of boundary conditions generalize two-dimensional domains with corners \cite{LeOu2018,CaLo2020, PiVdB2021}.
For a discussion of the physical meaning and realization of boundary conditions, we refer to \cite{BrFe2006, PoScKaYaHiNoGe2008, OrBuZo2013, MaMa2011}, the review \cite{Castro2009} and references therein.

\medskip
In  the first part of this article, we study the most general family of local boundary conditions for the four--component operator given, for instance, in \cite{AkhBee2007,AkhBee2008}. To make the paper self--contained, we give a detailed derivation in Appendix~\ref{appendixA}. The main result in Section~\ref{sec:a_unitary_transformation} is a unitary transformation that reduces each of these cases to a block-diagonal operator. This allows us to extend known results about the domain and spectrum for the two-component blocks to the general case. 

In the second part of this article, we specialize to the case of a terminated honeycomb lattice and study the boundary conditions there. For edges perpendicular to the carbon bonds, a block diagonal operator with \emph{zigzag} boundary conditions arises. On the other hand, for edges parallel to the bonds, \emph{armchair} boundary conditions should be imposed, which are not in block-diagonal form. We study them in details in Section~\ref{sec:armchair}, to check for which type of corners, armchair boundary conditions with constant parameters arise. For graphene quantum dots with these corners, the effective Hamiltonian will be unitary equivalent to two copies of $T$ with \emph{infinite mass} boundary conditions. The choice of unit cell and coordinates in the lattice is important for this derivation, hence we recall the derivation of the effective Dirac operator from the tight-binding Hamiltonian in Appendix~\ref{AppendixB}.


\subsection*{Set up and boundary conditions.}
Throughout this paper, $\Omega$ is a $C^2$ domain. For each point $s$ at the boundary, we define the outward normal ${\bm n}(s)=(n_1(s),n_2(s))^T$ and the tangent vector ${\bm t}(s)=(t_1(s),t_2(s))^T$, chosen such that $({\bf n}, {\bf t})$ is positively oriented. 

We first consider boundary conditions for the two--components operator T. It is convenient to write a local boundary condition in the form $m(s) \phi(s) = \phi(s)$, with some Hermitian matrix $m(s)$. In order to give rise to a self-adjoint operator, we can restrict our attention to matrices that are Hermitian, unitary and traceless, which anticommute with the boundary current, 
$$
\left\{ {\bm \sigma} \cdot {\bm n}(s), m(s) \right\}=0 \, .
$$
Such a boundary matrix takes the form
\begin{equation}\label{eq:2d_bc}
    m_\eta (s) := \cos\eta \, \qty(\bm \sigma \cdot \bm t(s)) + \sin\eta \, \sigma_3 \, .
\end{equation}
Here, we write $\bm{\sigma}=(\sigma_1,\sigma_2)^{\top}$ as the usual first two  Pauli matrices and a parameter $\eta \in [0, 2 \pi)$.
We can then define the operator $T_\eta$ that acts as $T$ on the domain
$$
\Dom(T_\eta) := \{\phi \in C^1(\overline{\Omega}, \C^2)\,| \, m_{\eta} \phi =\phi \text{ at } \partial \Omega \}\, .
$$
The following result follows from \cite{BenSoStVan2017} and \cite{Sch1995}.
\begin{theorem*}
    The operator $T_\eta$ is essentially self-adjoint. If $\cos\eta\neq 0$, then the domain of its closure is included in the first Sobolev space $H^1(\Omega, \C^2)$. In the case $\eta = \pi/2$, (resp. $\eta = 3\pi/2$), the domain of the closure is $H^1_0 \oplus \Dom^{\rm max} (\partial_z)$ (resp. $ \Dom^{\rm max}(\partial_{z*})\oplus H^1_0$).
\end{theorem*}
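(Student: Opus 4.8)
The plan is to prove the three assertions by combining a short symmetry computation with a local boundary analysis, and then to match the parameter $\eta$ to the interpolation parameter of \cite{BenSoStVan2017} (for $\cos\eta\neq 0$) and to the zigzag analysis of \cite{Sch1995} (for $\eta=\pi/2,\,3\pi/2$). First I would verify symmetry. For $\phi,\psi\in\Dom(T_\eta)$, integrating by parts in the first--order expression $-i\bm\sigma\cdot\grad$ leaves only a boundary term,
\begin{equation*}
\langle T_\eta\phi,\psi\rangle-\langle\phi,T_\eta\psi\rangle=-i\int_{\partial\Omega}\langle\phi(s),(\bm\sigma\cdot\bm n(s))\,\psi(s)\rangle_{\C^2}\,ds.
\end{equation*}
Because $m_\eta$ is Hermitian and unitary one has $m_\eta^2=I$, and the anticommutation $\{\bm\sigma\cdot\bm n,m_\eta\}=0$ gives $m_\eta(\bm\sigma\cdot\bm n)m_\eta=-(\bm\sigma\cdot\bm n)$. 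Inserting $m_\eta\phi=\phi$ and $m_\eta\psi=\psi$ on $\partial\Omega$, the pointwise integrand equals its own negative and hence vanishes, so $T_\eta$ is symmetric. The same self--duality of $m_\eta$ will make the boundary form vanish also for functions satisfying the adjoint boundary condition, which is the key input for ruling out deficiency spaces below.

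Everything else is local. Interior ellipticity of $T$ (its symbol $\bm\sigma\cdot\xi$ is invertible for $\xi\neq 0$) already gives $\Dom^{\rm max}(T)\subset H^1_{\rm loc}(\Omega)$, so only the behaviour at $\partial\Omega$ remains. Using that $\Omega$ is $C^2$, I would fix a boundary point, straighten the boundary and freeze coefficients, reducing to the half--plane $\{x_2>0\}$ with the \emph{constant} matrix $m_\eta=\cos\eta\,\sigma_1+\sin\eta\,\sigma_3$ coming from $\bm t=(1,0)$, $\bm n=(0,-1)$; the curvature of $\partial\Omega$ and the variation of $\bm t(s)$ enter only through zeroth--order terms, harmless both for self--adjointness (perturbatively) and for the $H^1$ estimates. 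On the half--plane I would Fourier transform in the tangential variable $x_1$ and solve the resulting $x_2$--ODE with parameter $\xi$, keeping the $L^2$ solution and imposing $m_\eta\hat\phi=\hat\phi$ at $x_2=0$.

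The decisive step, and the one I expect to be the main obstacle, is the coercivity (Shapiro--Lopatinski condition) of this boundary value problem. Inspecting the ODE, the boundary condition pins down the decaying solution uniquely exactly when $\cos\eta\neq 0$; in that regime one obtains a uniform-in-$\xi$ a priori bound $\|\xi\hat\phi\|+\|\partial_{x_2}\hat\phi\|\lesssim\|\widehat{T\phi}\|+\|\hat\phi\|$, which patches to $\Dom(\overline{T_\eta})\subset H^1(\Omega,\C^2)$ and, combined with the vanishing of the boundary form, forces $\ker(T_\eta^{*}\mp i)=\{0\}$ and hence essential self--adjointness. This coercive estimate --- where the algebraic form of $m_\eta$ and the geometry of $\Omega$ interact, and which breaks down precisely at $\cos\eta=0$ --- is what \cite{BenSoStVan2017} establishes, so I would conclude this case by citing it after the identification of $\eta$.

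Finally, for the zigzag values $\cos\eta=0$ the boundary condition degenerates: $m_\eta=\pm\sigma_3$ forces one spinor component to vanish on $\partial\Omega$ and leaves the other free, the sign of $\sin\eta$ selecting which. Writing $T$ in its off--diagonal Cauchy--Riemann form, the constrained component solves a Cauchy--Riemann equation with homogeneous Dirichlet data --- an elliptic scalar boundary value problem --- hence lies in $H^1_0(\Omega)$, while the free component is only required to have one complex derivative in $L^2$ and thus lies in the strictly larger maximal domain $\Dom^{\rm max}(\partial_z)$ or $\Dom^{\rm max}(\partial_{z*})$. Essential self--adjointness is then structural: the off--diagonal blocks $\partial_z$ on the minimal (Dirichlet) domain and $\partial_{z*}$ on the maximal domain are mutually adjoint, so the closure is self--adjoint with exactly the stated domain $H^1_0\oplus\Dom^{\rm max}(\partial_z)$, respectively $\Dom^{\rm max}(\partial_{z*})\oplus H^1_0$, recovering \cite{Sch1995}.
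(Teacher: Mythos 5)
Your proposal is correct in outline and rests on exactly the same inputs as the paper, which offers no independent proof of this statement but imports it from the literature --- the case $\cos\eta\neq 0$ from \cite{BenSoStVan2017} and the zigzag cases $\eta=\pi/2,\,3\pi/2$ from \cite{Sch1995} --- which is precisely where your sketch defers its two decisive steps (the coercive boundary estimate that fails exactly when $\cos\eta=0$, and the mutual adjointness of the minimal and maximal realizations of $\partial_z$ and $\partial_{z^*}$). The supporting computations you add --- symmetry via the anticommutation $\{\bm{\sigma}\cdot\bm{n}(s),m_\eta(s)\}=0$, interior ellipticity of the symbol, and the half-plane Fourier analysis locating the breakdown of the Lopatinski condition at $\cos\eta=0$ --- are sound and consistent with what those references establish.
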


For the graphene Dirac operator $H$, we define a four--parameter family of boundary matrices. 
In order to write out these boundary conditions in a tractable way, we use the Kronecker product notation for $2\times2$ matrices \cite{Moser1996}
\begin{align*}
A\otimes B= \mqty(A_{11} & A_{12} \\ A_{21} & A_{22}) \otimes B =  \mqty(A_{11}B & A_{12}B \\ A_{21}B & A_{22}B) \, .
\end{align*}
We also write $\sigma_0$ for the $2\times 2$ identity matrix, such that for instance
\begin{align*}
    H= \sigma_0 \otimes T \, .
\end{align*}
Finally, we will use throughout the paper a boldface for vectors $\bm v \in \R^2$ and boldface with an arrow for $\vec{\bm v} \in \R^3$.

For $\Gamma:= (\Lambda, \Theta, \theta_\nu, \phi_\nu)$, we define the vectors $\vec{\bm\nu} = (\cos \phi_\nu\cos\theta_\nu, \sin\phi_\nu \cos\theta_\nu, \sin\theta_\nu)^\top$, $\vec{\bm n}_1=(t_1(s)\cos\Theta,t_2(s)\cos\Theta,-\sin\Theta)^\top$ and $\vec{\bm n}_2=(t_1(s)\sin\Theta,t_2(s)\sin\Theta,\cos\Theta)^\top$. For $H$, the boundary matrix takes the form
\begin{align}
M_{\Gamma}:=  \sin\Lambda \, \qty(\sigma_0 \otimes (\vec{\bm{\sigma}}\cdot\vec{\bm{n}}_1 )) +  \cos\Lambda \, \qty( (\vec{\bm{\sigma}}\cdot\vec{\bm{\nu}} ) \otimes (\vec{\bm{\sigma}}\cdot\vec{\bm{n}}_2 ))  \, .
\label{generalM}
\end{align}
We define the corresponding Dirac operators $H_\Gamma$ acting as $H$ on
$$
\Dom(H_\Gamma) := \{\Psi \in C^1(\overline{\Omega}, \C^4)\, | \, M_\Gamma \Psi =\Psi \text{ at } \partial \Omega \}.
$$
Since $H_\Gamma$ anticommutes with the boundary current $\sigma_0\otimes (\bm \sigma \cdot \bm n(s))$, it is a symmetric operator (see Appendix \ref{appendixA} for details).
Our main result is presented in the following theorem.
\begin{theorem}\label{thm:unitary}
    The operator $H_\Gamma$ is unitarily equivalent to the direct sum
    $$
T_{\eta_+} \oplus T_{\eta_-} \quad \text{ with } \eta_{\pm} = -\Theta \pm \qty(\pi/2-\Lambda).
    $$
    In particular it is essentially self-adjoint, and the domain of its closure is included in $H^1$ whenever $\cos\eta_\pm$ are both nonzero. 
    \end{theorem}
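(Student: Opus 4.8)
The plan is to exploit the tensor-product structure. Since $H=\sigma_0\otimes T$ acts trivially on the first $\C^2$ factor, a unitary acting only on that factor will leave $H$ invariant while it can be chosen to diagonalize the first-factor part of $M_\Gamma$. First I would rewrite the two second-factor matrices in \eqref{generalM} in the two-component form \eqref{eq:2d_bc}. Because $\vec{\bm n}_1$ and $\vec{\bm n}_2$ have $\bm t$-components and are orthonormal in $\R^3$, a direct computation gives $\vec{\bm\sigma}\cdot\vec{\bm n}_1=\cos\Theta\,(\bm\sigma\cdot\bm t)-\sin\Theta\,\sigma_3=m_{-\Theta}$ and $\vec{\bm\sigma}\cdot\vec{\bm n}_2=\sin\Theta\,(\bm\sigma\cdot\bm t)+\cos\Theta\,\sigma_3=m_{\pi/2-\Theta}$, so that $M_\Gamma=\sin\Lambda\,(\sigma_0\otimes m_{-\Theta})+\cos\Lambda\,\bigl((\vec{\bm\sigma}\cdot\vec{\bm\nu})\otimes m_{\pi/2-\Theta}\bigr)$.

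Next, because the boundary parameters are constant, $\vec{\bm\nu}$ is a fixed unit vector, so $\vec{\bm\sigma}\cdot\vec{\bm\nu}$ is a constant Hermitian involution with eigenvalues $\pm1$. I would choose a constant unitary $V$ on $\C^2$ with $V(\vec{\bm\sigma}\cdot\vec{\bm\nu})V^{*}=\sigma_3$ and set $U:=V\otimes\sigma_0$. Conjugation by $U$ leaves $H$ untouched, $U(\sigma_0\otimes T)U^{*}=\sigma_0\otimes T$, while $U M_\Gamma U^{*}=\sin\Lambda\,(\sigma_0\otimes m_{-\Theta})+\cos\Lambda\,(\sigma_3\otimes m_{\pi/2-\Theta})$. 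With the stated Kronecker convention this matrix is block diagonal, with diagonal blocks $\sin\Lambda\,m_{-\Theta}\pm\cos\Lambda\,m_{\pi/2-\Theta}$, and $\sigma_0\otimes T$ is correspondingly $T\oplus T$.

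The heart of the matter is to identify each block as a two-component boundary matrix. Expanding in the basis $\{\bm\sigma\cdot\bm t,\sigma_3\}$ and using the angle-addition formulas, the upper block equals $\sin(\Lambda+\Theta)\,(\bm\sigma\cdot\bm t)+\cos(\Lambda+\Theta)\,\sigma_3=m_{\eta_+}$ with $\eta_+=\pi/2-(\Lambda+\Theta)=-\Theta+(\pi/2-\Lambda)$, and the lower block equals $\sin(\Lambda-\Theta)\,(\bm\sigma\cdot\bm t)-\cos(\Lambda-\Theta)\,\sigma_3=m_{\eta_-}$ with $\eta_-=-\Theta-(\pi/2-\Lambda)$. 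This is just the fact that two boundary matrices whose angles differ by $\pi/2$ anticommute, so that any unit linear combination is again of the form $m_\eta$; here $\sin^2\Lambda+\cos^2\Lambda=1$ supplies the needed normalization. Hence $U$ maps $\Dom(H_\Gamma)=\{\Psi\in C^1:M_\Gamma\Psi=\Psi\}$ onto $\{(\psi_+,\psi_-):m_{\eta_\pm}\psi_\pm=\psi_\pm\}=\Dom(T_{\eta_+})\oplus\Dom(T_{\eta_-})$, and intertwines $H_\Gamma$ with $T_{\eta_+}\oplus T_{\eta_-}$.

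Essential self-adjointness and the $H^1$-regularity of the closure then follow blockwise from the two-component Theorem quoted before the statement, applied to $T_{\eta_+}$ and $T_{\eta_-}$ separately, the $H^1$ inclusion holding precisely when $\cos\eta_+$ and $\cos\eta_-$ are both nonzero. The one point that really requires care is the constancy hypothesis: it is essential that $\vec{\bm\nu}$ not depend on $s$, so that $V$ is a single constant matrix and $U$ commutes with the differential operator $T$; all of the $s$-dependence of $M_\Gamma$ (through $\bm t(s)$) is confined to the second tensor factor, where it stays within the $m_\eta$ family and causes no difficulty. Beyond correctly bookkeeping the two angle identities, I do not expect a serious obstacle.
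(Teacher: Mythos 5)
Your proposal is correct and takes essentially the same route as the paper: both arguments conjugate by a constant unitary of the form $V\otimes\sigma_0$ acting only on the valley factor, chosen so that $V(\vec{\bm{\sigma}}\cdot\vec{\bm{\nu}})V^{*}=\sigma_3$, which leaves the differential expression $H=\sigma_0\otimes T$ invariant, turns $M_\Gamma$ into the block-diagonal matrix $m_{\eta_+}\oplus m_{\eta_-}$, and yields the same angle identifications $\eta_\pm=-\Theta\pm(\pi/2-\Lambda)$, after which essential self-adjointness and $H^1$-regularity follow blockwise from the two-component theorem. The only (immaterial) difference is that the paper constructs $V$ explicitly as $e^{i\frac{\theta_\nu}{2}\sigma_2}e^{i\frac{\phi_\nu}{2}\sigma_3}$ via two successive rotations, whereas you invoke the existence of such a $V$ abstractly from the fact that $\vec{\bm{\sigma}}\cdot\vec{\bm{\nu}}$ is a traceless Hermitian involution.
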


The unitary transformation that diagonalizes $H_\Gamma$ is given explicitly in the next section.
Theorem~\ref{thm:unitary} also allows us to obtain the domain of the closure of $H_\Gamma$, and to estimate its spectral gap by using the corresponding result in \cite{BenSoStVan2017-2}, see Corollary~\ref{cor:gap}.
An important special case are armchair boundary conditions. In Section~\ref{sec:armchair}, we show how different angles in the honeycomb lattice give rise, in a continuum limit, to a block-diagonal Dirac operator.

\section{A Unitary Transformation and Its Consequences}\label{sec:a_unitary_transformation}

\begin{proposition}\label{thm:unitarytransformations}
For $\vec{\bm\nu} = (\cos \phi_\nu \cos\theta_\nu, \sin \phi_\nu \cos\theta_\nu, \sin\theta_\nu)^\top$, define 
$$U_{\vec{\bm\nu}}:=\exp (i\frac{\theta_\nu}{2}\sigma_2) \exp(i\frac{\phi_\nu}{2}\sigma_3) \otimes \sigma_0. $$
Then 
     \begin{align*}
    U_{\vec{\bm\nu}} M_{\Gamma} U_{\vec{\bm\nu}}^* =    m_{\eta_{+}}\oplus m_{\eta_{-}} ,  \quad \text{ and } \quad 
    U_{\vec{\bm\nu}} H_{\Gamma} U_{\vec{\bm\nu}}^* =    T_{\eta_{+}}\oplus T_{\eta_{-}},
    \end{align*}
  with $\eta_\pm:= -\Theta \pm \qty(\pi/2-\Lambda)$.
\end{proposition}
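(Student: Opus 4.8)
The plan is to reduce the whole statement to a single pointwise-in-$s$ identity between $4\times 4$ matrices, namely the first displayed equation $U_{\vec{\bm\nu}} M_\Gamma U_{\vec{\bm\nu}}^* = m_{\eta_+}\oplus m_{\eta_-}$, and then to read off the operator equality from it. For this reduction the key observation is that $U_{\vec{\bm\nu}}=V\otimes\sigma_0$, with $V=\exp(i\tfrac{\theta_\nu}{2}\sigma_2)\exp(i\tfrac{\phi_\nu}{2}\sigma_3)$, is a \emph{constant} (position-independent) unitary acting only on the first tensor slot, while $H=\sigma_0\otimes T$ acts as the identity on that slot. Hence $U_{\vec{\bm\nu}}$ commutes with $H$, so $U_{\vec{\bm\nu}}HU_{\vec{\bm\nu}}^*=H$, and since a constant invertible matrix maps $C^1(\overline\Omega,\C^4)$ onto itself, conjugation affects only the boundary condition: the transformed domain is $\{\Phi\in C^1 : U_{\vec{\bm\nu}}M_\Gamma U_{\vec{\bm\nu}}^*\Phi=\Phi \text{ on }\partial\Omega\}$. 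Once the matrix identity is established, this domain splits as $\Dom(T_{\eta_+})\oplus\Dom(T_{\eta_-})$ in the decomposition $\C^4=\C^2\oplus\C^2$ induced by the first factor, and $H$ restricted to it acts as $T_{\eta_+}\oplus T_{\eta_-}$.

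To prove the matrix identity I would use that $U_{\vec{\bm\nu}}$ acts trivially on the second slot, so conjugation passes through the two summands of $M_\Gamma$ as
\begin{align*}
U_{\vec{\bm\nu}} M_\Gamma U_{\vec{\bm\nu}}^* = \sin\Lambda\,\bigl(\sigma_0\otimes(\vec{\bm\sigma}\cdot\vec{\bm n}_1)\bigr) + \cos\Lambda\,\bigl((V(\vec{\bm\sigma}\cdot\vec{\bm\nu})V^*)\otimes(\vec{\bm\sigma}\cdot\vec{\bm n}_2)\bigr).
\end{align*}
The defining feature of $U_{\vec{\bm\nu}}$ is that it is the spin rotation aligning $\vec{\bm\nu}$ with the third axis, that is $V(\vec{\bm\sigma}\cdot\vec{\bm\nu})V^*=\sigma_3$. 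This is the two-to-one $SU(2)\to SO(3)$ correspondence: conjugation by $\exp(i\tfrac{\phi_\nu}{2}\sigma_3)$ rotates the Pauli vector about the third axis and removes the azimuthal angle $\phi_\nu$, after which conjugation by $\exp(i\tfrac{\theta_\nu}{2}\sigma_2)$ is a rotation in the $(1,3)$-plane that tilts the remaining in-plane vector onto $\sigma_3$. I would verify this either from the explicit action of the two elementary rotations on $\sigma_1,\sigma_2,\sigma_3$, or equivalently by writing the composite $SO(3)$ matrix $\mathcal R$ and checking $\mathcal R\,\vec{\bm\nu}=\vec{\bm e}_3$.

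With $V(\vec{\bm\sigma}\cdot\vec{\bm\nu})V^*=\sigma_3$ substituted, the expression becomes $\sin\Lambda(\sigma_0\otimes(\vec{\bm\sigma}\cdot\vec{\bm n}_1))+\cos\Lambda(\sigma_3\otimes(\vec{\bm\sigma}\cdot\vec{\bm n}_2))$, which is manifestly block diagonal since $\sigma_0=\mathrm{diag}(1,1)$ and $\sigma_3=\mathrm{diag}(1,-1)$; the two diagonal blocks are $\sin\Lambda(\vec{\bm\sigma}\cdot\vec{\bm n}_1)\pm\cos\Lambda(\vec{\bm\sigma}\cdot\vec{\bm n}_2)$. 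The final step is bookkeeping: inserting $\vec{\bm\sigma}\cdot\vec{\bm n}_1=\cos\Theta(\bm\sigma\cdot\bm t)-\sin\Theta\,\sigma_3$ and $\vec{\bm\sigma}\cdot\vec{\bm n}_2=\sin\Theta(\bm\sigma\cdot\bm t)+\cos\Theta\,\sigma_3$ and collecting terms, the coefficient of $\bm\sigma\cdot\bm t$ becomes $\sin(\Lambda\pm\Theta)$ and the coefficient of $\sigma_3$ becomes $\pm\cos(\Lambda\pm\Theta)$. Comparing with $m_\eta=\cos\eta(\bm\sigma\cdot\bm t)+\sin\eta\,\sigma_3$ and using the angle-addition identities $\cos\eta_\pm=\sin(\Lambda\pm\Theta)$ and $\sin\eta_\pm=\pm\cos(\Lambda\pm\Theta)$ identifies each block with $m_{\eta_\pm}$ for $\eta_\pm=-\Theta\pm(\pi/2-\Lambda)$, which is exactly the claimed value.

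The main obstacle I anticipate is the crux identity $V(\vec{\bm\sigma}\cdot\vec{\bm\nu})V^*=\sigma_3$: one must keep careful track of the half-angle factors and of the orientation conventions in the $SU(2)\to SO(3)$ map, since a sign error there would produce $-\sigma_3$ (interchanging the two blocks $\eta_+\leftrightarrow\eta_-$) or leave a spurious $\sigma_1$-component that destroys block-diagonality altogether. Everything after that point—the block splitting and the angle-addition identities fixing $\eta_\pm$, as well as the passage from the matrix identity to the operator identity via the conjugated domain—is routine.
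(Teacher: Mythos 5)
Your proposal is correct and follows essentially the same route as the paper's proof: conjugation passes through the Kronecker product leaving the $\sin\Lambda$ term untouched, the two elementary rotations $e^{i\phi_\nu\sigma_3/2}$ and $e^{i\theta_\nu\sigma_2/2}$ align $\vec{\bm\sigma}\cdot\vec{\bm\nu}$ with $\sigma_3$, and the same angle-addition bookkeeping identifies the diagonal blocks with $m_{\eta_\pm}$, after which the passage to the operator statement (invariance of the differential expression $\sigma_0\otimes T$ under a constant unitary acting on the first slot, and the induced splitting of the domain) is the same as in the paper. The one caveat you flag — sign/convention errors in the crux identity $V(\vec{\bm\sigma}\cdot\vec{\bm\nu})V^*=\sigma_3$ — is real but shared with the paper itself: with $\vec{\bm\nu}$ parameterized by latitude as in the statement, that identity holds only if $\theta_\nu$ is read as the polar angle (otherwise the aligning rotation angle is $\pi/2-\theta_\nu$), a convention slip that appears identically in the paper's own computation, so it does not distinguish your argument from theirs.
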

\begin{proof}
We will frequently use the property
$$
 (A \otimes B)(C \otimes D) = (AC)\otimes(BD).
$$
As the first step, we consider the matrix 
\begin{align*}
  U_{\phi_\nu}:= e^{i\frac{\phi_\nu}{2}\sigma_3}  \otimes \sigma_0 \, 
\end{align*}
defining a unitary transformation. 
This transformation can be interpreted as a clockwise rotation of the $(x,y)$ plane by an angle $\phi_\nu$. 
The first term of $M_{\Gamma}$ is invariant under this transformation, while for the second term, we have that
 \begin{align*}
     e^{i\frac{\phi_\nu}{2}\sigma_3} (\vec{\bm{\sigma}}\cdot\vec{\bm{\nu}} ) e^{-i\frac{\phi_\nu}{2}\sigma_3} 
     &= \mqty( e^{i\frac{\phi_\nu}{2}} & 0 \\ 0 & e^{-i\frac{\phi_\nu}{2}}) \mqty(\cos\theta_\nu & \sin\theta_\nu e^{-i\phi_\nu} \\  \sin\theta_\nu e^{i\phi_\nu} & -\cos\theta_\nu)  \mqty( e^{-i\frac{\phi_\nu}{2}} & 0 \\ 0 & e^{i\frac{\phi_\nu}{2}})
     \\
     &= \mqty(\cos\theta_\nu & \sin\theta_\nu \\  \sin\theta_\nu & -\cos\theta_\nu) \, .
 \end{align*}
One could therefore restrict our parameters to the case $\phi_\nu=0$, i.e., confining $\vec{\bm{\nu}}$ to the $(x,z)$ plane. Now, we write
\begin{align*}
  \mqty(\cos\theta_\nu & \sin\theta_\nu \\  \sin\theta_\nu & -\cos\theta_\nu) =  \cos\theta_\nu \, \sigma_3+ \sin\theta_\nu \, \sigma_1 =      e^{-i\frac{\theta_\nu}{2}\sigma_2} \sigma_3 e^{i\frac{\theta_\nu}{2}\sigma_2} \, ,
\end{align*}
which motivates the definition $U_{\theta_\nu}:=e^{i\frac{\theta_\nu}{2}\sigma_2}\otimes\sigma_0$.  This matrix defines a unitary transformation that leaves the first term of $M_{\Gamma}$ invariant and it transforms the second term of $M_{\Gamma}$ into the case $\theta_\nu=0$ (i.e., $\vec{\bm{\nu}}=\hat{\bm{z}}$). 

After the two transformations, we obtain 
\begin{align*}
    U_{\theta_\nu} U_{\phi_\nu} M_{\Gamma} U_{\phi_\nu}^\ast  U_{\theta_\nu}^\ast =  \sin\Lambda \, \qty(\sigma_0 \otimes (\vec{\bm{\sigma}}\cdot\vec{\bm{n}}_1 )) +  \cos\Lambda \, \qty( \sigma_3\otimes (\vec{\bm{\sigma}}\cdot\vec{\bm{n}}_2 ))= m_{\eta_+}\oplus m_{\eta_-} \, .
\end{align*}

Using the parameterization of $\vec{\bm{n}}_1$ and $\vec{\bm{n}}_2$ we get
\begin{align*}
    m_{\eta_\pm} &= \qty(\sin\Lambda\cos\Theta\pm \cos\Lambda\sin\Theta )(\bm{\sigma}\cdot \bm{t}) + \qty(-\sin\Lambda\sin\Theta \pm \cos\Lambda\cos\Theta) \sigma_3  \\
    &=\sin (\Lambda\pm\Theta)(\bm{\sigma}\cdot \bm{t}) \pm \cos(\Lambda \pm \Theta)\sigma_3 \, ,
\end{align*}
so $\eta_\pm= -\Theta \pm \qty(\pi/2-\Lambda)$.
Finally, the differential expression $H= T \oplus T$ is invariant under the transformation $U_{\vec{\bm\nu}}$, which maps $\Dom(H_\Gamma)$ onto $\Dom(T_{\eta_+}) \oplus \Dom(T_{\eta_-})$
\end{proof}

A direct consequence of the unitary equivalence is a description of the domain of the closure of the operator $H_\Gamma$

\begin{corollary}\label{cor:domain}
   
    For $\Gamma=( \Lambda,\Theta, \theta_\nu , \phi_\nu)$, define $\eta_\pm= -\Theta \pm \qty(\pi/2-\Lambda)$ as before. If $\cos\eta_+\neq 0$ and $\cos\eta_-\neq 0$, then the closure $\bar H_\Gamma$ has domain included in the first Sobolev space $H^1(\Omega)$.
    In all cases, the domain of $\bar H_\Gamma$ is given by $U_{\vec{\bm{ \nu}}} \Dom(T_{\eta_+})\oplus \Dom(T_{\eta_-})$.
\end{corollary}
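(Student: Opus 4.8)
The plan is to reduce every claim to the two--component setting through Proposition~\ref{thm:unitarytransformations} and the unnumbered Theorem on $T_\eta$ stated above, exploiting that $U_{\vec{\bm\nu}}$ is conjugation by a \emph{constant} matrix in $\C^4$. I would first record the general principle that conjugation by a bounded, everywhere--defined unitary commutes with taking closures: from the operator identity $U_{\vec{\bm\nu}} H_\Gamma U_{\vec{\bm\nu}}^\ast = T_{\eta_+}\oplus T_{\eta_-}$ established in the Proposition, the map $\Psi\mapsto U_{\vec{\bm\nu}}\Psi$ carries the graph of one operator onto the graph of the other, and being bounded with bounded inverse it maps graph closures to graph closures, so that
$$
U_{\vec{\bm\nu}}\,\bar H_\Gamma\, U_{\vec{\bm\nu}}^\ast = \overline{T_{\eta_+}\oplus T_{\eta_-}}.
$$

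Next I would invoke the standard fact that the closure of a direct sum is the direct sum of the closures, $\overline{T_{\eta_+}\oplus T_{\eta_-}}=\bar T_{\eta_+}\oplus\bar T_{\eta_-}$, which follows by decomposing the graph in $L^2(\Omega,\C^4)\oplus L^2(\Omega,\C^4)$ along the two blocks. Inverting the conjugation then yields the last assertion of the Corollary,
$$
\Dom(\bar H_\Gamma)=U_{\vec{\bm\nu}}^\ast\bigl(\Dom(\bar T_{\eta_+})\oplus\Dom(\bar T_{\eta_-})\bigr),
$$
valid for all values of $\Gamma$. For the $H^1$ inclusion I then apply the Theorem blockwise: since $\cos\eta_+\neq0$ and $\cos\eta_-\neq0$, each $\Dom(\bar T_{\eta_\pm})$ is contained in $H^1(\Omega,\C^2)$, so their direct sum lies in $H^1(\Omega,\C^4)$.

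The only point deserving care --- which I would flag as the main, though mild, obstacle --- is that applying $U_{\vec{\bm\nu}}^\ast$ must preserve $H^1$--regularity, so that the inclusion survives the transport. This is exactly where the hypothesis of \emph{constant} boundary parameters enters: $U_{\vec{\bm\nu}}=\exp(i\tfrac{\theta_\nu}{2}\sigma_2)\exp(i\tfrac{\phi_\nu}{2}\sigma_3)\otimes\sigma_0$ is a fixed matrix independent of the boundary point $s$, hence it commutes with every partial derivative and acts as a pointwise isometry on $\C^4$; consequently $\|U_{\vec{\bm\nu}}^\ast\Psi\|_{H^1}=\|\Psi\|_{H^1}$ and $U_{\vec{\bm\nu}}^\ast$ maps $H^1(\Omega,\C^4)$ onto itself. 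This gives $\Dom(\bar H_\Gamma)\subset H^1(\Omega,\C^4)$ and finishes the argument. Finally, I would remark that in the degenerate cases $\cos\eta_\pm=0$ the same transport argument, now fed with the explicit blocks $H^1_0\oplus\Dom^{\mathrm{max}}(\partial_z)$ and $\Dom^{\mathrm{max}}(\partial_{z^*})\oplus H^1_0$ supplied by the Theorem, delivers an explicit description of $\Dom(\bar H_\Gamma)$ in those cases as well.
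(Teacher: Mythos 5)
Your proposal is correct and takes essentially the same approach as the paper, which offers no separate proof but states the corollary as ``a direct consequence of the unitary equivalence'' of Proposition~\ref{thm:unitarytransformations}; you have merely made explicit the standard facts that were left implicit (conjugation by a bounded unitary commutes with taking graph closures, the closure of a direct sum is the direct sum of the closures, and the constant unitary $U_{\vec{\bm\nu}}$ preserves $H^1$). One small point in your favor: your formula $\Dom(\bar H_\Gamma)=U_{\vec{\bm\nu}}^\ast\bigl(\Dom(\bar T_{\eta_+})\oplus\Dom(\bar T_{\eta_-})\bigr)$ is the orientation actually forced by the identity $U_{\vec{\bm\nu}} H_\Gamma U_{\vec{\bm\nu}}^\ast=T_{\eta_+}\oplus T_{\eta_-}$, whereas the corollary as printed writes $U_{\vec{\bm\nu}}$ where $U_{\vec{\bm\nu}}^\ast$ (equivalently $U_{\vec{\bm\nu}}^{-1}$) is meant.
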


Next, we show that the lowest positive eigenvalue has a lower bound that only depends on the area of the domain and on the parameters $\Lambda$ and $\Theta$ that define the boundary conditions. For that purpose, it will be helpful to define the function
\begin{align}
B_\eta:= \min\qty(|\cos\eta/(1-\sin\eta)|, |(1-\sin\eta)/\cos\eta|) \label{B} \, ,
\end{align}
for $\eta \in (0,2\pi)\setminus\{\pi/2, 3\pi/2\}$. 

\begin{corollary}\label{cor:gap}
   
For $\Gamma=(\Theta, \Lambda, \phi_\nu, \theta_\nu )$, define  $\eta_\pm= -\Theta \pm \qty(\pi/2-\Lambda)$ as before. If $\cos\eta_+\neq 0$ and $\cos\eta_{-}\neq 0$, then any eigenvalue $\lambda$ of $H_{{\Gamma}}$ satisfies
\begin{align*}
\lambda^{2} \geq \dfrac{2\pi}{|\Omega|} \min\qty{B_{\eta_+}^{2}, B_{\eta_-}^{2} } \, . 
\end{align*}
\end{corollary}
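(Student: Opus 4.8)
The plan is to transport the whole problem to the two--component setting, where the corresponding spectral bound has already been established. By Proposition~\ref{thm:unitarytransformations} (equivalently Theorem~\ref{thm:unitary}), the operator $H_\Gamma$ is unitarily equivalent, via $U_{\vec{\bm\nu}}$, to the direct sum $T_{\eta_+}\oplus T_{\eta_-}$ with $\eta_\pm=-\Theta\pm(\pi/2-\Lambda)$. Since unitary equivalence preserves the spectrum together with multiplicities, and the spectrum of a direct sum is the union of the spectra of its summands, every eigenvalue $\lambda$ of $H_\Gamma$ is an eigenvalue of $T_{\eta_+}$ or of $T_{\eta_-}$ (acting on the same domain $\Omega$, hence with the same area $|\Omega|$). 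This reduces the claim to a statement about each two--component block.

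Next I would invoke the spectral gap estimate for the two--component operator from \cite{BenSoStVan2017-2}: whenever $\cos\eta\neq 0$, so that $\eta\in(0,2\pi)\setminus\{\pi/2,3\pi/2\}$ and $\Dom(\bar T_\eta)\subset H^1(\Omega)$ by the Theorem of the previous section, any eigenvalue $\mu$ of $T_\eta$ satisfies $\mu^2\geq \frac{2\pi}{|\Omega|}B_\eta^2$, with $B_\eta$ as defined in \eqref{B}. The hypotheses $\cos\eta_+\neq 0$ and $\cos\eta_-\neq 0$ are precisely what guarantees that this cited estimate applies to both blocks, and, through Corollary~\ref{cor:domain}, that the requisite $H^1$ regularity holds. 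Applying the estimate to the block in which $\lambda$ lives gives $\lambda^2\geq \frac{2\pi}{|\Omega|}B_{\eta_+}^2$ or $\lambda^2\geq \frac{2\pi}{|\Omega|}B_{\eta_-}^2$; in either case $\lambda^2\geq \frac{2\pi}{|\Omega|}\min\{B_{\eta_+}^2,B_{\eta_-}^2\}$, which is the asserted inequality.

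The step I expect to carry essentially all the analytic weight is the cited two--component bound itself, since that is where the Faber--Krahn--type argument producing the area dependence and the constant $2\pi$ resides; within the present corollary this is simply a black box. The only genuine care required here is bookkeeping: verifying that the unitary $U_{\vec{\bm\nu}}$ transports the eigenvalue problem for $\bar H_\Gamma$ faithfully onto that of $\bar T_{\eta_+}\oplus \bar T_{\eta_-}$, and that the regularity condition needed by \cite{BenSoStVan2017-2} is met on both summands. There is no substantive obstacle beyond this, so the proof should be short.
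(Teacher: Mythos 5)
Your proof is correct and follows essentially the same route as the paper: both reduce $H_\Gamma$ to $T_{\eta_+}\oplus T_{\eta_-}$ via the unitary $U_{\vec{\bm\nu}}$ of Proposition~\ref{thm:unitarytransformations} and then invoke the two--component bound of \cite[Theorem 1]{BenSoStVan2017-2}. The only cosmetic difference is that the paper applies the norm inequality $\|T_{\eta}\phi\|^2 \geq \tfrac{2\pi}{|\Omega|}B_\eta^2\|\phi\|^2$ to both components of $U_{\vec{\bm\nu}}\Psi$ at once and takes the minimum, whereas you localize the eigenvalue in a single block; the two bookkeeping steps are interchangeable.
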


\begin{proof}
If $\cos(\eta)\neq 0$, the  bound
\begin{align*}
\norm{T_\eta \phi}^{2} \geq \dfrac{2\pi}{|\Omega|} B_{\eta}^{2} \norm{\phi}^{2} \, 
\end{align*}
holds for all $\phi\in\mathrm{Dom}(T_\eta)$, see \cite[Theorem 1]{BenSoStVan2017-2}, where the method from \cite{Bar1992} is applied in the Euclidean case with boundary. Using this inequality and the unitary equivalence obtained in Theorem \ref{thm:unitary},  we obtain that 
\begin{align*}
\norm{H_{\Gamma}\Psi}^2=\norm{(UHU^{\ast})U\Psi}^2 =\norm{T_{\eta_+}\widetilde{\phi}_1}^2 + \norm{T_{\eta_+}\widetilde{\phi}_2}^2 &\geq \dfrac{2\pi}{|\Omega|}\qty(B_{\eta_{+}}^2 \norm{\widetilde{\phi}_1}^2+ B_{\eta_{-}}^2\norm{\widetilde{\phi}_2}^2) \, 
\end{align*}
for all $\Psi\in\mathrm{Dom}(H_\Gamma)$, where $\widetilde{\Psi}=U\Psi =\qty(\widetilde{\phi}_1 , \widetilde{\phi}_2)^{\top}\in\mathrm{Dom}(T_{\eta_+}\oplus T_{\eta_-})$. We complete the proof by taking the minimum of both functions in the last inequality.
\end{proof}


\section{Boundary Conditions for Armchair Edges}\label{sec:armchair}
In this section, we study boundary conditions arising from the tight-binding model for a terminated honeycomb lattice. 
Our goal is to obtain the boundary condition that holds in the discrete setting and express it in the parametric form $M_\Gamma$.
Then, in a formal scaling limit, the tight--binding operator on the domain under consideration converges to a Dirac operator with this boundary condition.

In Appendix~\ref{AppendixB} we recall the derivation of the Dirac operator from the tight--binding model and in Figure~\ref{fig:lattice_conventions}, we show our conventions for the lattice vectors and unit cell.
To obtain the effective Dirac operator in form \eqref{DiracHamiltonian}, we are led to define the $4$--spinor 
\begin{align*}
\Psi := (\Psi_A^{+}, - i \Psi_B^{+}, i \Psi_B^{-}, - \Psi_A^{-})^{\top},  
\end{align*}
where $A,B$ index sublattices and $\pm $ the Dirac points.
A boundary condition in the tight-binding model arises from the requirement that the wavefunction vanishes at the edge sites. For simplicity, here and in the following we write \emph{edge sites} to refer to the lattice sites just outside the edge (the red sites in Figure~\ref{fig:lattice_conventions}).
When a polygon or sector has \emph{zigzag} edges with $A$-sites on the edge, the boundary condition reads simply
\begin{align*}
 \Psi = (0, - i \Psi_B^{+}, i \Psi_B^{-}, 0)^{\top}   
\end{align*}
and we obtain $M_{\mathrm zigzag}= - \sigma_3 \otimes \sigma_3$. For $B$-sites at the outside, the sign flips.

\begin{figure}
    \centering
     \hspace*{\fill}
    \includegraphics{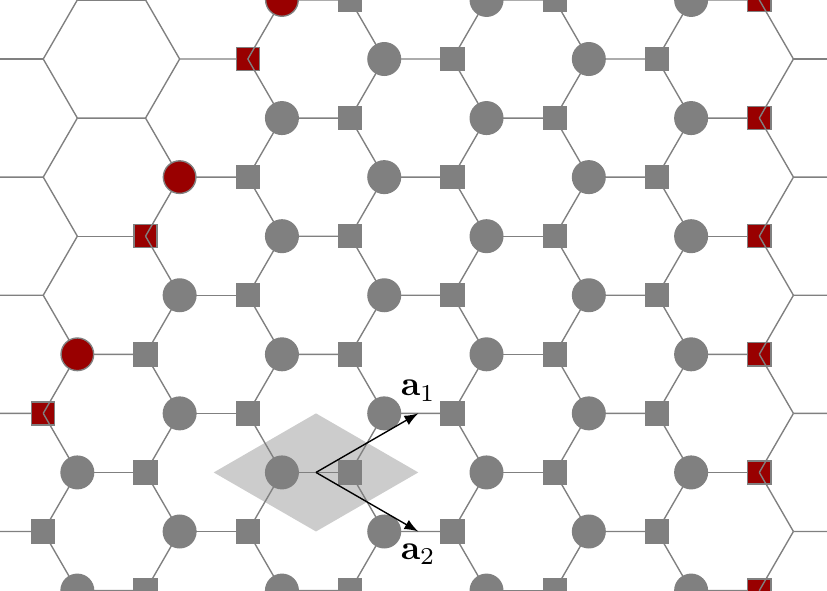}
     \hspace*{\fill}
    \includegraphics{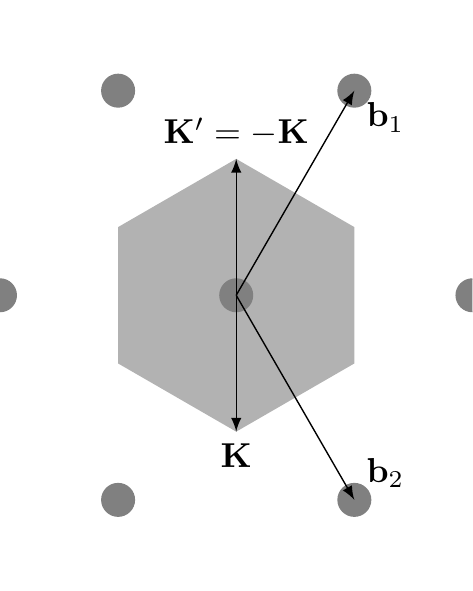}
     \hspace*{\fill}
    \caption{Conventions for the honeycomb lattice and its dual lattice. $A$-sites are circles, $B$-sites squares. At the zigzag edge at the right, the wavefunction vanishes on the red $B$-sites. Along the armchair edge at the left, the wavefunction vanishes on red sites that belong to either sublattice. In this case, the $A$ and $B$ sites along the edge are in different unit cells.}
    \label{fig:lattice_conventions}
\end{figure}

For armchair boundary conditions, the situation is somewhat more involved.
if $\bm{r}_{nm} := n \bm{a}_1 + m \bm{a}_2$ is the position of the corresponding to the $A$ or $B$ site at the edge,
then we need that the sum of contributions from both valleys cancels,
\begin{align*}
\Psi_X^{+}(\bm{r}_{nm})= - e^{i(\bm{K}'-\bm{K})\cdot \bm{r}_{nm}}\Psi_X^{-}(\bm{r}_{nm}), \quad X= A,B.
\end{align*}
We use that $\bm{K}- \bm{K}'=2 (\bm{b}_2 -\bm{b}_1)/3$, where the reciprocal lattice vectors are defined in Figure~\ref{fig:lattice_conventions}.
Inserting this, the boundary condition for the components at the edge is 
\begin{align}\label{eq:armchair_tb}
   \Psi_{X}^{+}(\bm{r}_{nm})= - e^{i \frac{4\pi}{3} (m-n)}\Psi_{X}^{-}(\bm{r}_{nm}), \quad X= A,B. 
\end{align}
In order for this boundary condition to be meaningful in a scaling limit, we need that $e^{i \frac{4\pi}{3} (n-m)}$ is constant when $\bm{r}_{nm}$ varies over the sites of the edge under consideration. This  means that $n-m$ is constant modulo $3$, and this precisely selects the armchair edges, whose equations in terms of the integers $n,m$ are given in Figure~\ref{fig:wedge_example}.

For each armchair edge, the prefactors $e^{i \frac{4\pi}{3} (m-n)}$ take different values on $A$ and $B$ sublattices, that depend on the intercept of the edge. 
All these boundary conditions are unitary equivalent to a block-diagonal one in view of our previous theorem.
However, these precise values become relevant when studying domains bounded by several armchair edges. The question is then whether a unitary transformation that simultaneously diagonalizes the boundary condition for each edge exists. In order to find such a transformation, we have to put the boundary condition on $\Psi_X^{+}, \Psi_X^{-}$ given by \eqref{eq:armchair_tb} into the form $M\Psi = \Psi$ for a matrix $M_\Gamma$ as defined in \eqref{generalM}.

Generally speaking,
an armchair boundary condition takes the form
\begin{align*}
      \Psi_A^{+}(\bm{r}_{nm})&= -\delta_A \Psi_A^{-}(\bm{r}_{nm}) \, ,\\
      \Psi_B^{+}(\bm{r}_{nm})&= -\delta_B \Psi_B^{-}(\bm{r}_{nm}).
\end{align*}
Or in terms of $\Psi$,
$$
M \Psi = \Psi, \quad M:= 
\begin{pmatrix} 
0&0&0&\delta_A^* \\
0&0&\delta_B^*&0\\
0&\delta_B&0&0\\
\delta_A &0&0&0
\end{pmatrix}
$$
with unitary coefficients $\delta_A=  - e^{i \frac{4\pi}{3} (m-n)} $ with $n,m$ the coordinates of an $A$-site at the edge, and analogously for $\delta_B$.
We now check that this matrix $M$ is indeed of the general form presented in \eqref{generalM}. The only possibility for an anti--diagonal matrix is to take $\cos\Lambda= 1$, $\vec{\bm{\nu}}= (\nu_1, \nu_2, 0)$, $\vec{\bm{n}}_2= (t_1, t_2,0)$.
In this case, it is convenient to define complex numbers of unit modulus, $\nu = \nu_1 + i \nu_2$ and similar for $t$, such that
$$
M:= 
\begin{pmatrix} 
0&0&0&\nu^* t^* \\
0&0&\nu^* t&0\\
0&\nu t^*&0&0\\
\nu t &0&0&0
\end{pmatrix}.
$$
We see that both forms are compatible if $\delta_A/\delta_B = t^2$, and that in this case, $\nu = t^*\delta_A$.
The following table shows that this actually happens along each armchair edge.

Now we can study infinite wedges bounded by armchair edges. 
Our problem is to determine the shape of a corner between edges $e_1$ and $e_2$, that gives rise to the same value of $\nu$. 
As illustrated by Figure~\ref{fig:wedge_example}, this happens if and only if $\delta_A(e_1) /\delta_A(e_2) =  t(e_1)/t(e_2)$.
If both edges intersect at an $A$-site, this is not possible, and by symmetry, the same holds for lines intersecting at a $B$-site. It is also possible for the edges to intersect at the centre of a hexagon and a short computation shows that in this case, $\nu$ is indeed constant.
Figure~\ref{fig:wedges} shows the shape of such terminated honeycomb wedges. 

For any \emph{armchair polygon} with these vertices, the boundary condition can be diagonalized. In a scaling limit, the tight-binding Hamiltonian on such a polygon approaches a Dirac operator that is unitary equivalent to two copies of the infinite mass operator. In particular, its spectrum is doubly degenerate and symmetric around zero.

\begin{figure}[]
\begin{minipage}{0.5\linewidth}
\begin{center}
    \begin{tabular}{c| c c c}
 edge direction & equation & $\delta_A/\delta_B$ & $t_1+ i t_2$ \\
\hline
   horizontal &$n-m=c$   & $1$ &  $\pm 1$\\
   60°&$2n + m=c$   & $ e^{-i \frac{4\pi}{3}}$ &  $\pm e^{i\frac{\pi}{3}}$\\
   120° &$2m + n =c$ &$e^{i \frac{4\pi}{3}}$& $\pm e^{i\frac{2\pi}{3}}$
\end{tabular}
\end{center}
\end{minipage}
\begin{minipage}{0.4\linewidth}
    \includegraphics[width=\linewidth]{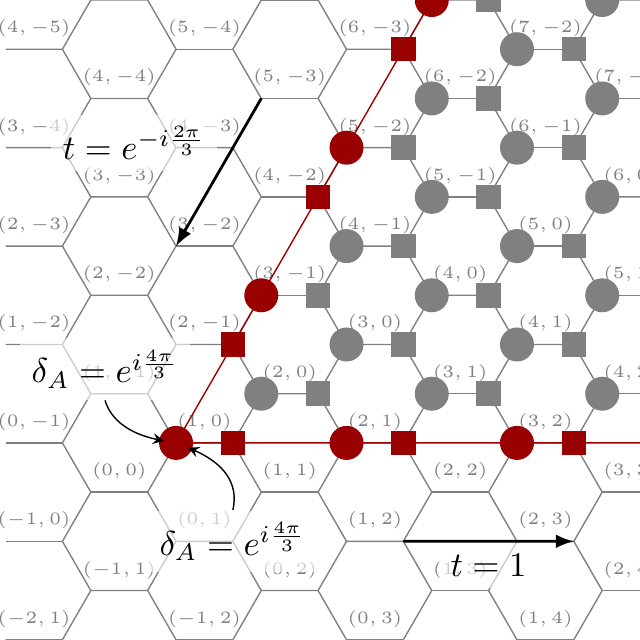}
\end{minipage}

    \caption{Example of a wedge with different values of $\nu$ at each edge. The values of $(n,m)$ for each unit cell are displayed, which allows to compute the equation and values of $\delta_A/\delta_B$ for arbitrary armchair edges. }
    \label{fig:wedge_example}
\end{figure}

\begin{figure}[]

     \centering
    \hspace*{\fill}
    \includegraphics[scale=0.8]{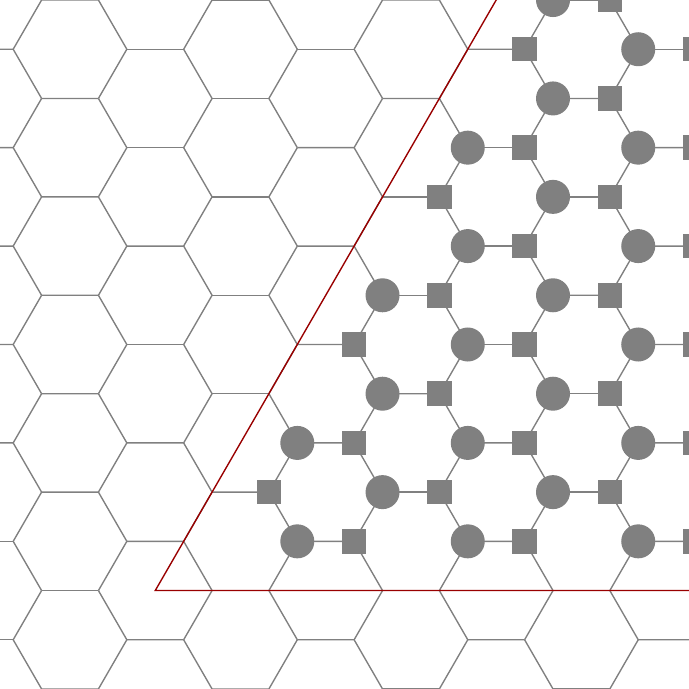}
    \hfill
     \includegraphics[scale=0.8]{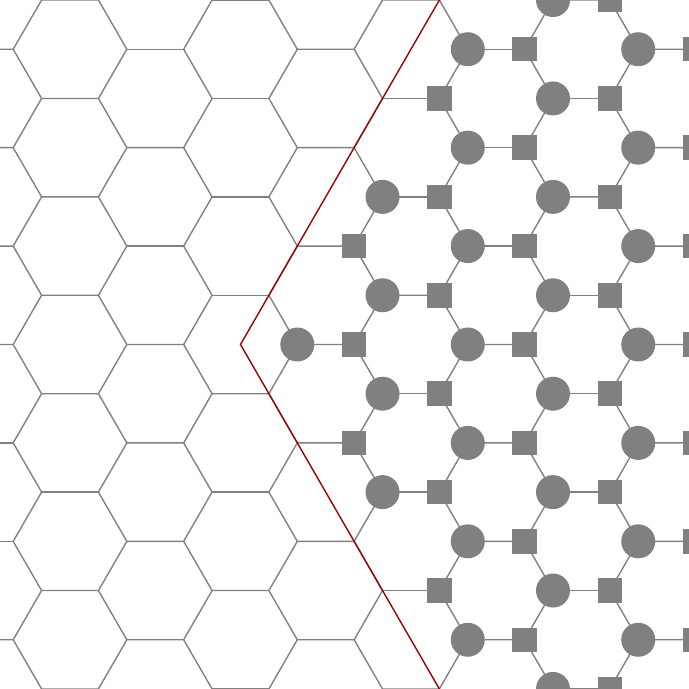}
    \hspace*{\fill}

     \hspace*{\fill}
      \includegraphics[scale=0.8]{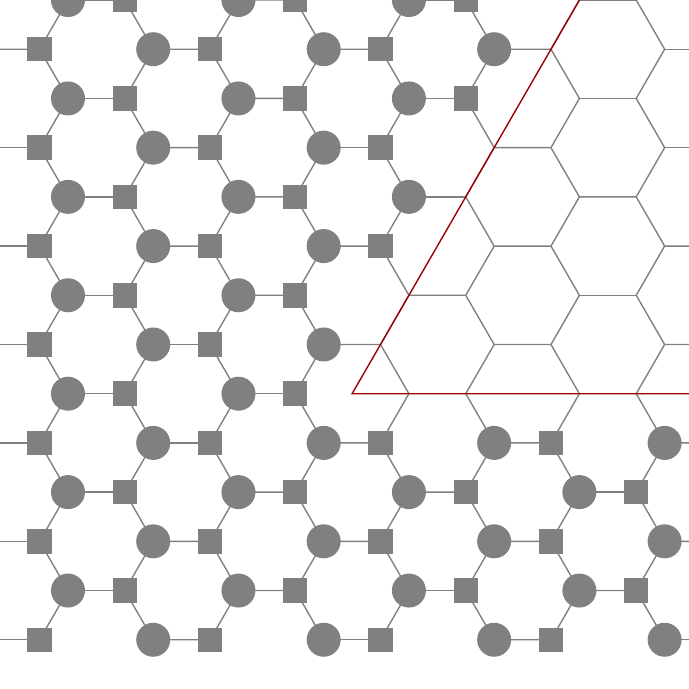}
      \hfill
      \includegraphics[scale=0.8]{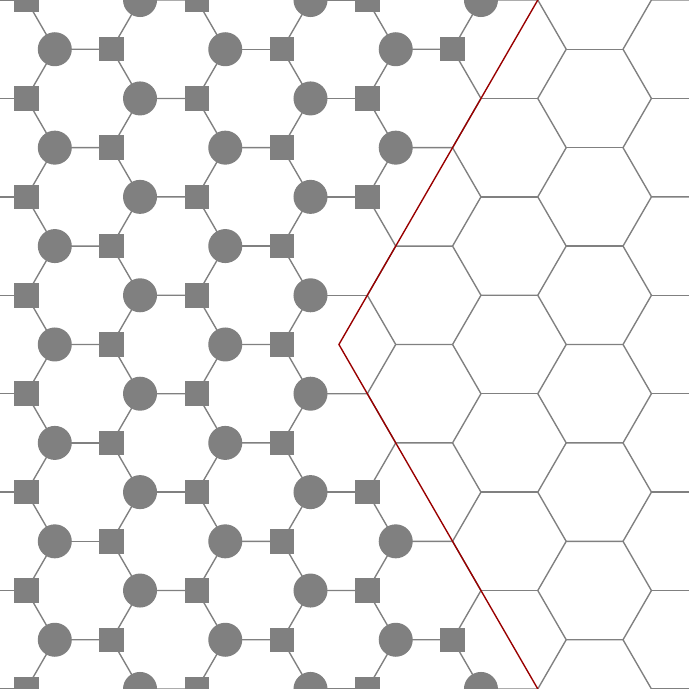}
      \hspace*{\fill}

    \caption{ Armchair wedges that have identical parameters $\nu$ on both edges.}
    \label{fig:wedges}
\end{figure}

\newpage

\setcounter{equation}{0}
\renewcommand{\theequation}{A.\arabic{equation}}

\appendix
\section{Construction of the Boundary Matrices} \label{appendixA}

In this appendix, we explicitly derive an expression for the most general {\it admissible} matrix that turns $H$ into a symmetric operator, the family of matrices $M_{\Gamma}$ in equation \eqref{generalM} (cf. \cite{AkhBee2007,AkhBee2008}). By  {\it admissible} we mean a matrix $M_\Gamma$ which is unitary, traceless and self--adjoint. Furthermore, using Green's identity for $u,v  \in  C_{ }^{1}(\overline\Omega,\mathbb{C}^4)$ we have
 \begin{align}
 \langle u, Hv\rangle & = - i \int_\Omega \qty(u, \sigma_0 \otimes (\bm{\sigma} \cdot \grad) v )_{\mathbb{C}_{ }^{4}} \nonumber \, ,\\
 &= -i \int_\Omega \grad{} \cdot \qty(u, \sigma_0 \otimes \bm{\sigma}  v )_{\mathbb{C}_{ }^{4}} +i \int_\Omega \qty(\sigma_0 \otimes (\bm{\sigma} \cdot \grad) v ,u)_{\mathbb{C}_{ }^{4}} \nonumber  \, , \\
 &= \langle Hu, v\rangle - i \int_{\partial \Omega}  \qty(u, \sigma_0 \otimes (\bm{\sigma}\cdot \bm{n})  v )_{\mathbb{C}_{ }^{4}} \label{Hsymmetry} \, .
 \end{align}
Thus, the boundary term in the last expression vanishes if $M_\Gamma$ anticommutes with the normal current to the boundary $\sigma_0\otimes (\bm{\sigma}\cdot \bm{n})$, i.e., 
 \begin{align*}
     \qty(u, \sigma_0 \otimes (\bm{\sigma}\cdot \bm{n})  v )_{\mathbb{C}_{ }^{4}} = \dfrac{1}{2} \qty ( \qty(M_{\Gamma} u, \sigma_0 \otimes (\bm{\sigma}\cdot \bm{n})  v )_{\mathbb{C}_{ }^{4}} + \qty(u, \sigma_0 \otimes (\bm{\sigma}\cdot \bm{n})  M_{\Gamma} v )_{\mathbb{C}_{ }^{4}} ) \, .
 \end{align*}
Summing up, we look for a matrix $M_\Gamma$ satisfying
\begin{subequations}
	\begin{align}
	M_{\Gamma}^\ast = M_{\Gamma} \, \text{,} \, M_{\Gamma}^2=1  \, & , \text{ and}, \label{eq:Mselfadjoint}\\
	\{M_{\Gamma},\sigma_0 \otimes (\bm{\sigma}\cdot \bm{n})\}&=0 \label{ap:anticommutation} \, .
	\end{align}
\end{subequations}
First, we can express a Hermitian  $4\times4$ matrix as a linear combination of the Kronecker product between the $2\times 2$ Pauli matrices,
\begin{align*}
M_{\Gamma}= \sum_{i,j=0}^{3} c_{ij} ( \sigma_i\otimes \sigma_j) \, ,
\end{align*}
where $c_{ij}\in\mathbb{R}$ because $M_{\Gamma}$ is self--adjoint. For $\bm{\vec{a}},\bm{\vec{b}}\in \mathbb{R}^3$, the following properties of  the Pauli matrices are useful to establish the conditions on these real coefficients $c_{ij}$,
\begin{subequations}
	\begin{align}
	(\bm{\vec{\sigma}}\cdot\bm{\vec{a}})(\bm{\vec{\sigma}}\cdot\bm{\vec{b}})&= (\bm{\vec{a}} \cdot \bm{\vec{b}})\,\sigma_0 +i\bm{\vec{\sigma}}\cdot (\bm{\vec{a}}\times \bm{\vec{b}})    \, , \label{ap:relation1}\\
	\{ \sigma_j,\bm{\vec{\sigma}}\cdot\bm{\vec{a}} \} &= 2 a_j \sigma_0  (1-\delta_{j0})+ 2 (\bm{\vec{\sigma}}\cdot\bm{\vec{a}}) \delta_{j0}  \label{ap:relation2} \, .	
	\end{align}
\end{subequations}
Using the anticommutation relations \eqref{ap:anticommutation} and \eqref{ap:relation2}, we obtain
\begin{align*}
\{M_{\Gamma},\sigma_0 \otimes (\bm{\sigma}\cdot \bm{n})\} 
&= 2\sum_{i=0}^{3} \sigma_i \otimes \qty( c_{i0} (\bm{\sigma}\cdot \bm{n})  + \sigma_0 \sum_{j=1}^{3} c_{ij} n_{j}   ) = 0 \, .
\end{align*}
Thus, the term in parenthesis must vanish. With the definition $\bm{\vec{n}}=(n_1(s),n_2(s),0)^{\top}$, we obtain that $c_{i0}=\bm{\vec{c}}_i \cdot \bm{\vec{n}}=0$ for all $i\in\{0,1,2,3\}$, with $\bm{\vec{c}_i}=(c_{i1},c_{i2},c_{i3})^{\top}$. Hence,
\begin{align*}
M_{\Gamma}= \sum_{i=0}^{3}  ( \sigma_i \otimes \bm{\vec{\sigma}}\cdot \bm{\vec{c}}_i )
= \mqty( \bm{\vec{\sigma}}\cdot(\bm{\vec{c}}_0 + \bm{\vec{c}}_3)  & \bm{\vec{\sigma}}\cdot(\bm{\vec{c}}_1 - i\bm{\vec{c}}_2)  \\
\bm{\vec{\sigma}}\cdot(\bm{\vec{c}}_1 + i\bm{\vec{c}}_2)  & \bm{\vec{\sigma}}\cdot(\bm{\vec{c}}_0 - \bm{\vec{c}}_3)  )  \, .
\end{align*}
Using the relation \eqref{ap:relation1}, we explicitly obtain that
\begin{align*}
M_{\Gamma}^2 &= \mqty( \sigma_0\qty((\bm{\vec{c}}_0 + \bm{\vec{c}}_3)^2   + \bm{\vec{c}}_1^{\ 2} + \bm{\vec{c}}_2^{\ 2} )  +2 \bm{\vec{\sigma}}\cdot (\bm{\vec{c}}_1\times \bm{\vec{c}}_2)& 
2\sigma_0 \bm{\vec{c}}_0 \cdot (\bm{\vec{c}}_1 - i\bm{\vec{c}}_2) + 2i \bm{\vec{\sigma}}\cdot (\bm{\vec{c}}_3 \times (\bm{\vec{c}}_1 - i\bm{\vec{c}}_2))  \\
2\sigma_0\bm{\vec{c}}_0 \cdot (\bm{\vec{c}}_1 + i\bm{\vec{c}}_2) + 2i \bm{\vec{\sigma}}\cdot( \bm{\vec{c}}_3 \times (\bm{\vec{c}}_1 + i\bm{\vec{c}}_2))& 
\sigma_0((\bm{\vec{c}}_0 - \bm{\vec{c}}_3)^2   +   \bm{\vec{c}}_1^{\ 2} + \bm{\vec{c}}_2^{\ 2} )  +2 \bm{\vec{\sigma}}\cdot(\bm{\vec{c}}_1\times \bm{\vec{c}}_2)
) \, .
\end{align*}
The condition $M_{\Gamma}^2=1$ implies that $\bm{\vec{c}}_0=c_0\bm{\vec{n}}_1$ is orthogonal to $\bm{\vec{c}}_1,\bm{\vec{c}}_2,\bm{\vec{c}}_3$, $\bm{\vec{c}}_i =c_i \bm{\vec{n}}_2$ ($i\in\{1,2,3\}$) for some unit vector $\bm{\vec{n}}_2$ orthogonal to $\bm{\vec{n}}_1$, and  $c_0^2+c_1^2+c_2^2+c_3^2=1$.
It follows that 
\begin{align*}
M_{\Gamma}=  \sin\Lambda \, \qty(\sigma_0 \otimes (\bm{\vec{\sigma}}\cdot\bm{\vec{n}}_1 )) + \cos\Lambda \, \qty( (\bm{\vec{\sigma}}\cdot\bm{\vec{\nu}} ) \otimes  \, (\bm{\vec{\sigma}}\cdot\bm{\vec{n}}_2 ))   \, ,
\end{align*}
where $\bm{\vec{\nu}},\bm{\vec{n}}_1, \bm{\vec{n}}_2$ are three--dimensional unit vectors such that $\bm{\vec{n}}_1 \cdot\bm{\vec{n}}_2 =\bm{\vec{n}}_1 \cdot \bm{\vec{n}}=\bm{\vec{n}}_2 \cdot \bm{\vec{n}}=0$ and $\Lambda \in \R$. We paramertrize $\vec{\bm{\nu}}=(\cos\phi_\nu\cos\theta_\nu,\sin\phi_\nu\cos\theta_\nu,\sin\theta_\nu)$.

\setcounter{equation}{0}
\renewcommand{\theequation}{B.\arabic{equation}}

\section{Derivation of the Dirac Equation }\label{AppendixB}
We use the conventions introduced in Figure~\ref{fig:lattice_conventions}. Integer indices $n,m$  label each unit cell, the position of its centre is defined as $\bm{r}_{nm}:= n \bm{a}_1 + m \bm{a}_2$. 
In a scaling limit, $\bm{r}_{nm}$ becomes a continuous variable and therefore it is convenient to write the discrete wavefunction at a lattice site as $\psi_A(\bm{r})$ and $\psi_B(\bm{r})$. The tight--binding Hamiltonian at a site $A(B)$ depends on the sum of the wave--function at its nearest neighbours on the $B(A)$--sublattice. 
\begin{align*}
    \left(H_{\mathrm{t.b.}} \psi\right)_A(\bm{r})
    &= t\qty(\psi_B(\bm{r}) + \psi_B(\bm{r}-\bm{a}_1)+\psi_B(\bm{r}-\bm{a}_2)) \, , \\
    \left(H_{\mathrm{t.b.}} \psi\right)_B(\bm{r})
    &= t\qty(\psi_A(\bm{r}) + \psi_A(\bm{r}-\bm{a}_1)+\psi_A(\bm{r}-\bm{a}_2)) \, .
\end{align*}
The energies are given by $\pm \abs{f(\bm{k})}$, with $f(\bm{k})=t\qty(1+e^{-i\bm{k}\cdot\bm{a}_2}+e^{-i\bm{k}\cdot\bm{a}_1})$ and $\bm{k}$ in the first Brillouin zone (FBZ). 

The restriction to low energies amounts to replacing each of these wave--functions by plane waves with momenta $\pm\bm{K}$, which are the so--called Dirac points in the FBZ, defined as the wave--vectors where the energy vanishes: $f(\pm\bm{K})=0$. To simplify the calculations, we have chosen the non--equivalent Dirac points as the two corners of the FBZ lying in the vertical axis (see Figure~\ref{fig:lattice_conventions}): $\xi\bm{K}=\xi\qty(0,-\frac{4\pi}{3a})^\top$, where $\xi=\pm$ is the valley index. Thus, the Ansatz for the wavefunction $\psi_X$ becomes
\begin{align*}
\psi_X(\bm{r})=  e^{i\bm{K}\cdot\bm{r}} \Psi_{X}^{+}(\bm{r})+e^{-i\bm{K}\cdot\bm{r}}\Psi_{X}^{-}(\bm{r}), \quad X=A,B. 
\end{align*}
Replacing the above in the tight--binding Hamiltonian at a site $A$, we get
\begin{align*}
    \left(H_{\mathrm{t.b.}} \psi\right)_A(\bm{r})
    &= t\sum\limits_{\xi=\pm} e^{i\xi\bm{K}\cdot\bm{r}} \left(\Psi_{B}^{\xi}(\bm{r}) +  e^{-i\xi\bm{K}\cdot\bm{a_1}}\Psi_{B}^{\xi}(\bm{r}-\bm{a}_1)+e^{-i\xi\bm{K}\cdot\bm{a_2}}\Psi_{B}^{\xi}(\bm{r}-\bm{a}_2)\right) \, .
\end{align*}
Next, we aproximate $\Psi_{B}^{\xi}(\bm{r} - \bm{a}_j)$ by its first--order Taylor expansion. The constant terms vanish by the definition of the Dirac points, and we are left with
\begin{align*}
    \left(H_{\mathrm{t.b.}} \psi\right)_A(\bm{r})
    &\approx -t\sum\limits_{\xi=\pm} e^{i\xi\bm{K}\cdot\bm{r}}\left(e^{-i\xi\bm{K}\cdot\bm{a_1}}\bm{a}_{1}+e^{-i\xi\bm{K}\cdot\bm{a_2}}\bm{a}_{2}\right) \cdot\grad \Psi_{B}^{\xi}(\bm{r}) \, , \\
    &= -\dfrac{\sqrt{3}}{2}ta \sum\limits_{\xi=\pm} e^{i\xi\bm{K}\cdot\bm{r}}\qty( \partial_1-i\xi\partial_2 ) \Psi_{B}^{\xi}(\bm{r}) \, .
\end{align*}
In the last line we used that $\bm{a}_1=a/2(\sqrt{3},1)^\top$ and $\bm{a}_2=a/2(\sqrt{3},-1)^\top$. This expression leads to the definition $v_F=\sqrt{3}\,ta/2$, the Fermi velocity in graphene. By symmetry of the operator (or the analogous computation), for a $B$-site we obtain
\begin{align*}
    \left(H_{\mathrm{t.b.}} \psi\right)_B(\bm{r})
    &\approx v_F \sum\limits_{\xi=\pm} e^{i\xi\bm{K}\cdot\bm{r}}\qty( \partial_1+i\xi\partial_2 ) \Psi_{A}^{\xi}(\bm{r}) \, .
\end{align*}
Thus, upon defining the spinor  $\Psi(\vb{r}) =\qty(\Psi^{+}_A(\bm{r}),-i\Psi^{+}_B(\bm{r}),i\Psi^{-}_B(\bm{r}),-\Psi^{-}_A(\bm{r}))^\top$ in terms of the four amplitudes, we obtain the effective Hamiltonian that acts as
\begin{align*}
    H\Psi(\vb{r})= -iv_F \qty(\sigma_0 \otimes (\bm{\sigma}\cdot\grad)) \Psi(\vb{r}) \, .
\end{align*}
Finally, we set $v_F=1$ to recover equation \eqref{DiracHamiltonian}. 

\section*{Acknowledgments}
\thanks{The work of R.B. has been supported by Fondecyt (Chile) Project \# 120--1055.  The work of E.S has been partially funded by Fondecyt (Chile) Project \# 114--1008. The work of C.V. has been supported by Becas Chile and Fondecyt Projects \# 116--0856 and \# 120--1055. The work of H. VDB. has been partially supported by  Fondecyt Project \# 1122–0194 and by the Centre for Mathematical Modeling, ANID Basal grant \# FB210005.
}

\end{document}